\documentclass[12pt]{article}

\usepackage{amsmath}
\usepackage{amssymb}
\usepackage{amsfonts}
\usepackage{latexsym}
\usepackage{color}

\catcode `\@=11 \@addtoreset{equation}{section}
\def\theequation{\arabic{section}.\arabic{equation}}
\catcode `\@=12



  \voffset1cm

\newcommand{\be}{\begin{equation}}
\newcommand{\en}{\end{equation}}
\newcommand{\bea}{\begin{eqnarray}}
\newcommand{\ena}{\end{eqnarray}}
\newcommand{\beano}{\begin{eqnarray*}}
\newcommand{\enano}{\end{eqnarray*}}
\newcommand{\bee}{\begin{enumerate}}
\newcommand{\ene}{\end{enumerate}}

\newcommand{\N}{\mathfrak N}

\newcommand{\mc}{\mathcal}

\newcommand{\F}{{\cal F}}

\newcommand{\1}{1 \!\! 1}

\newcommand{\Hil}{\mc H}
\newcommand{\kt}{\rangle}
\newcommand{\br}{\langle}

\newtheorem{thm}{Theorem}

\newenvironment{proof}{\noindent {\bf Proof --}}{\hfill$\square$ \vspace{3mm}\endtrivlist}

\catcode `\@=11 \@addtoreset{equation}{section}
\catcode `\@=12

\textwidth17cm \textheight21cm

\hoffset-1.5cm \voffset-1cm

\begin{document}

\thispagestyle{empty}

\vspace*{2cm}

\begin{center}
{\Large \bf Linear Pseudo-fermions}   \vspace{2cm}\\

{\large F. Bagarello}\\
  Dieetcam,
Facolt\`a di Ingegneria,\\ Universit\`a di Palermo, I-90128  Palermo, Italy\\
e-mail: fabio.bagarello@unipa.it\\
home page: www.unipa.it$\backslash$fabio.bagarello

\end{center}

\vspace*{2cm}

\begin{abstract}
\noindent In a recent series of papers we have analyzed a certain deformation of the canonical commutation relations  producing an interesting
functional structure which has been proved to have some connections with physics, and in particular with quasi-hermitian quantum mechanics.
Here we repeat a similar analysis starting with  the canonical anticommutation relations. We will show that in this case most of the
assumptions needed in the former situation are automatically satisfied, making our construction rather {\em friendly}. We discuss some examples
of our construction, again related to quasi-hermitian quantum mechanics, and the bi-coherent states for the system.

\end{abstract}

\vspace{2cm}


\vfill


\newpage

\section{Introduction}

In a series of papers, \cite{bagpb1}-\cite{bagrev}, we have considered two operators $a$ and $b$, with $b\neq a^\dagger$, acting on a Hilbert
space $\Hil$, and satisfying the  commutation rule $[a,b]=\1$. A nice functional structure has been deduced under suitable assumptions, and
some connections with physics, and in particular with quasi-hermitian quantum mechanics and with the technique of intertwining operators, have
been established. We have called {\em pseudo-bosons} (PB) the particle-like excitations associated to this structure.  A similar analysis has
also been carried out for what we have called {\em nonlinear pseudo-bosons} (NLPB) in \cite{bagnlpb1}-\cite{bagzno2}, and most of the original
results have been recovered also in this more general situation. The analytical treatment of both PB and NLPB  turns out to be particularly
difficult in the case where {\em regularity} is lost, that is, see below, when the biorthogonal bases automatically constructed out of our
strategy are not Riesz bases. In this case, in fact, the intertwining operators appearing in the game (whose square roots are metric operators
in the sense of the literature on quasi-hermitian quantum mechanics, \cite{ben,mosta,zno}) turns out to be unbounded and a large amount of care
should be used to deal properly with their domains, among the other problems.

Here we construct the same structure in a simpler, but still physically very relevant, situation, extending the canonical anticommutation
relation (CAR) in a similar fashion as that producing PB, and considering some mathematical as well as physical aspects of this extension. We
should acknowledge Trifonov and his collaborators for his original idea of dealing with what they call pseudo-fermions (PF), \cite{tripf}. This
idea is considered here adopting a different point of view, going in the direction of our previous works. In particular, we will see that the
mathematical structure here, with respect to that arising from (linear or nonlinear) PB, is more friendly since, being $\Hil$, the Hilbert
space of the theory, intrinsically finite-dimensional, no problem with unbounded operators do appear. This finite-dimensionality of $\Hil$ is
interesting also in view of the many finite-dimensional examples which have been considered, along the years, for quasi-hermitian quantum
mechanics or for similar extensions of {\em ordinary} quantum mechanics\footnote{In the literature many different extensions of ordinary
quantum mechanics do exist which share a common feature,that is the fact that the hamiltonian of the system is not self-adjoint in the {\em
natural} Hilbert space where the model is defined. Just to cite few well-established extensions, Bender is one of the father-founders of the
so-called PT-quantum mechanics, \cite{ben}, Znojil uses the crypto-hermiticity concept, \cite{zno}, Mostafazadeh adoptes what he calls
quasi-hermiticity,\cite{mosta}}. In many examples of these extensions the models considered live in finite dimensional Hilbert spaces,
\cite{ben}-\cite{jon}. This choice has two main consequences: first of all, from a technical point of view one has to do with finite
dimensional matrices. Hence the computations are, at least in principle, simplified. The second, and mathematically more relevant consequence,
is that the observables of the models are all bounded operators, and this highly simplify the rigorous treatment of the system. Just to have an
idea of the differences between bounded and unbounded situations, one could consider the two companion papers \cite{bagzno1} and
\cite{bagzno2}, where this problem has been considered in connection with NLPB and crypto-hermiticity. For this reason, finite dimensional
models turns out to be so important in the present context, and fermionic operators produce a perfect possibility of constructing these kind of
models, as we will see in the rest of the paper.

The paper is organized as follows: in the next section we discuss the linear extension of the CAR, giving rise to what we will call {\em linear
pseudo-fermions}.  Section III is devoted to the examples, while in Section IV we discuss coherent states arising from pseudo-fermions. Section
V contains our conclusions. In the Appendix we briefly consider the existence of an intertwining relation of the kind discussed below for a
rather general 2 by 2 non-self adjoint hamiltonian.

\section{Linear pseudo-fermions}

We begin this section recalling the definition of linear pseudo-bosons and listing some mathematical consequences of this definition. This
preliminary results, which can be found, for instance, in \cite{bagrev}, are useful to keep the paper self-contained. We will show that most of
the  assumptions needed for PB are automatically true for PF and, as a consequence, the new structure we are going to construct is much simpler
than the previous one.

\subsection{A short resume of linear PB}

Let $\Hil$ be a given Hilbert space with scalar product $\left<.,.\right>$ and related norm $\|.\|$. We introduce a pair of operators, $a$ and
$b$,  acting on $\Hil$ and satisfying the  commutation rule \be [a,b]=\1, \label{21} \en where $\1$ is the identity on $\Hil$.  Of course, this
collapses to the canonical commutation rule (CCR)  if $b=a^\dagger$. Calling $D^\infty(X):=\cap_{p\geq0}D(X^p)$  the common domain of all the
powers of the operator $X$, we consider the following:

\vspace{2mm}

{\bf Assumption 1.--} there exists a non-zero $\varphi_{ 0}\in\Hil$ such that $a\varphi_{ 0}=0$, and $\varphi_{ 0}\in D^\infty(b)$.

{\bf Assumption 2.--} there exists a non-zero $\Psi_{ 0}\in\Hil$ such that $b^\dagger\Psi_{ 0}=0$, and $\Psi_{ 0}\in D^\infty(a^\dagger)$.

\vspace{2mm}

Under these assumptions we can introduce the following vectors in $\Hil$:

\be \varphi_{n}=\frac{1}{\sqrt{n!\,}}\,b^{n}\,\varphi_{ 0},\qquad \Psi_{n}=\frac{1}{\sqrt{n!\,}}\,{a^\dagger}^{n}\,\Psi_{ 0}, \label{22}\en
$n=0, 1, 2,\ldots$. Let us now define the unbounded operators $N:=b\,a$ and $\N:=N^\dagger=a^\dagger b^\dagger$. It is possible to check that
$\varphi_{ n}$ belongs to the domain of $N$, $D(N)$, and $\Psi_{ n}\in D(\N)$, for all possible $n$. Moreover, \be N\varphi_{ n}=n\,\varphi_{
n},  \quad \N\Psi_{ n}=n\,\Psi_{ n}. \label{23}\en

Under the above assumptions, and if we chose the normalization of $\Psi_{ 0}$ and $\varphi_{ 0}$ in such a way that $\left<\Psi_{ 0},\varphi_{
0}\right>=1$, we deduce that \be \left<\Psi_{ n},\varphi_{ m}\right>=\delta_{ n,m}. \label{27}\en Then the sets $\F_\Psi=\{\Psi_{ n}\}$ and
$\F_\varphi=\{\varphi_{ n}\}$ are { biorthogonal} and, because of this, the vectors of each set are linearly independent. This suggests to
consider the following

\vspace{2mm}

{\bf Assumption 3.--}  $\F_\Psi$ and $\F_\varphi$ are complete in $\Hil$.

\vspace{2mm}

In particular this means that both $\F_\varphi$ and $\F_\Psi$ are bases of $\Hil$. Let us now introduce the operators $S_\varphi$ and $S_\Psi$
via their action respectively on  $\F_\Psi$ and $\F_\varphi$: \be S_\varphi\Psi_{ n}=\varphi_{ n},\qquad S_\Psi\varphi_{ n}=\Psi_{ n},
\label{213}\en for all $ n$, which also imply that $\Psi_{ n}=(S_\Psi\,S_\varphi)\Psi_{ n}$ and $\varphi_{ n}=(S_\varphi \,S_\Psi)\varphi_{
n}$, for all $ n$. Hence \be S_\Psi\,S_\varphi=S_\varphi\,S_\Psi=\1 \quad \Rightarrow \quad S_\Psi=S_\varphi^{-1}, \label{214}\en at least if
these operators are bounded. If they are not bounded, on the other hand, this is not guaranteed, \cite{bagzno2}. In other words, both $S_\Psi$
and $S_\varphi$ are invertible and one is the inverse of the other. Furthermore, they are both positive, well defined and symmetric,
\cite{bagpb1}. Moreover, it is possible to write these operators using the bra-ket notation as \be S_\varphi=\sum_{ n}\, |\varphi_{
n}><\varphi_{ n}|,\qquad S_\Psi=\sum_{ n} \,|\Psi_{ n}><\Psi_{ n}|. \label{212}\en Whenever $S_\varphi$ and $S_\Psi$ are bounded these series
are uniformly convergent and, as a consequence, $\F_\varphi$ and $\F_\Psi$ turn out to be Riesz bases of $\Hil$.

In the literature we have called {\em regular} those PB for which $\F_\varphi$ and $\F_\Psi$ are Riesz bases. This is not always true, as shown
by the physical examples discussed, for instance, in \cite{bagpb4}. For this reason in our previous literature we have also added the following

\vspace{2mm}

{\bf Assumption 4.--}  $\F_\Psi$ and $\F_\varphi$ are Riesz bases for $\Hil$.

\vspace{2mm}

We have proved that this assumption is satisfied if, and only if, the operators $S_\varphi$ and $S_\Psi$ are both bounded, property which, for
concrete models, is quite often violated: in physically motivated models, PB which are not regular seem to be the {\em natural ones}, and we
have necessarily to deal with unbounded operators.

It is easy to check that \be S_\Psi\,N=\N\,S_\Psi \quad \mbox{ and }\quad N\,S_\varphi=S_\varphi\,\N. \label{219}\en  This is in agreement with
the fact that the eigenvalues of $N$ and $\N$ coincide and that their eigenvectors are related by the operators $S_\varphi$ and $S_\Psi$, as
suggested by the literature on intertwining operators.

\subsection{Linear pseudo-fermions}

The problem of domains of possible unbounded operators does not exist for fermions, making all the story much simpler. This section is devoted
to show how CAR can be modified as we did for CCR, and how this extension looks more {\em natural and safe} than that for PB. As for linear PB
the starting point is a modification of the CAR $\{c,c^\dagger\}=c\,c^\dagger+c^\dagger\,c=\1$, $\{c,c\}=\{c^\dagger,c^\dagger\}=0$, between
two operators, $c$ and $c^\dagger$, acting on a two-dimensional Hilbert space $\Hil$. The CAR are replaced here by the following rules: \be
\{a,b\}=\1, \quad \{a,a\}=0,\quad \{b,b\}=0, \label{220}\en where the relevant situation is when $b\neq a^\dagger$. Compared with Assumptions
1-4 for PB, the only assumptions we need to require now are the following

\begin{itemize}

\item {\bf p1.} a non zero vector $\varphi_0$ exists in $\Hil$ such that $a\,\varphi_0=0$.

\item {\bf p2.} a non zero vector $\Psi_0$ exists in $\Hil$ such that $b^\dagger\,\Psi_0=0$.

\end{itemize}

Under these two natural conditions it is possible to recover similar results as those for PB. In particular, we first introduce the following
non zero vectors \be \varphi_1:=b\varphi_0,\quad \Psi_1=a^\dagger \Psi_0, \label{221}\en as well as the non self-adjoint operators \be
N=ba,\quad \N=N^\dagger=a^\dagger b^\dagger. \label{222}\en We further introduce the self-adjoint operators $S_\varphi$ and $S_\Psi$ via their
action on a generic $f\in\Hil$: \be S_\varphi f=\sum_{n=0}^1\br\varphi_n,f\kt\,\varphi_n, \quad S_\Psi f=\sum_{n=0}^1\br\Psi_n,f\kt\,\Psi_n.
\label{223}\en Hence we get the following results, similar in part to those for PB,  whose proofs are straightforward and will not be given
here:

\begin{enumerate}

\item \be a\varphi_1=\varphi_0,\quad b^\dagger\Psi_1=\Psi_0.\label{224}\en

\item \be N\varphi_n=n\varphi_n,\quad \N\Psi_n=n\Psi_n,\label{225}\en
for $n=0,1$.

\item If the normalizations of $\varphi_0$ and $\Psi_0$ are chosen in such a way that $\left<\varphi_0,\Psi_0\right>=1$,
then \be \left<\varphi_k,\Psi_n\right>=\delta_{k,n},\label{226}\en for $k,n=0,1$.

\item $S_\varphi$ and $S_\Psi$ are bounded, strictly positive, self-adjoint, and invertible. They satisfy
\be \|S_\varphi\|\leq\|\varphi_0\|^2+\|\varphi_1\|^2, \quad \|S_\Psi\|\leq\|\Psi_0\|^2+\|\Psi_1\|^2,\label{227}\en \be S_\varphi
\Psi_n=\varphi_n,\qquad S_\Psi \varphi_n=\Psi_n,\label{228}\en for $n=0,1$, as well as $S_\varphi=S_\Psi^{-1}$ and the following
intertwining relations \be S_\Psi N=\N S_\Psi,\qquad S_\varphi \N=N S_\varphi.\label{229}\en

\end{enumerate}

The above formulas show that (i) $N$ and $\N$ behave as fermionic number operators, having eigenvalues 0 and 1; (ii) their related eigenvectors
are respectively the vectors in $\F_\varphi=\{\varphi_0,\varphi_1\}$ and $\F_\Psi=\{\Psi_0,\Psi_1\}$; (iii) $a$ and $b^\dagger$ are lowering
operators for $\F_\varphi$ and $\F_\Psi$ respectively; (iv) $b$ and $a^\dagger$ are rising operators for $\F_\varphi$ and $\F_\Psi$
respectively; (v) the two sets $\F_\varphi$ and $\F_\Psi$ are biorthonormal; (vi) the {\em very well-behaved} operators $S_\varphi$ and
$S_\Psi$ maps $\F_\varphi$ in $\F_\Psi$ and viceversa; (vii) $S_\varphi$ and $S_\Psi$ intertwine between operators which are not self-adjoint,
in the very same way as they do for PB.

It is clear that we don't need to add any condition on the possibility of computing, for instance, $b\,\varphi_0$, as we had to do for PB, see
Assumption 1. In fact,  we can always act on a two-dimensional vector with a two-by-two matrix! Also, we don't need to check (or to ask for)
Assumption 3, since this is automatically satisfied: being biorthogonal, the vectors of both $\F_\varphi$ and $\F_\Psi$ are linearly
independent. Hence $\varphi_0$ and $\varphi_1$ are linearly independent in a two-dimensional Hilbert space, so that $\F_\varphi$ is a basis for
$\Hil$. The same conclusion obviously applies to $\F_\Psi$. We will show in a moment that both these sets are also Riesz bases, so that
Assumption 4 for PB is automatically satisfied as well. This will appear to be a consequence of the properties listed above for $S_\varphi$ and
$S_\Psi$.

In \cite{bagpbJPA} we have analyzed the relations existing between PB, regular or not, with ordinary bosons, and we have shown how slightly
more complicated methods of functional analysis are needed when regularity is lost, due to the fact that the operators $S_\varphi$ and $S_\Psi$
turn out to be unbounded. Therefore, it is not surprising that this kind of difficulties do not appear here since, as the inequalities in
(\ref{227}) show, $S_\varphi$ and $S_\Psi$ are bounded operators. More explicitly, the two main theorems proved in \cite{bagpbJPA} assume now
the following simpler form:

\begin{thm} Let $c$ and $T=T^\dagger$ be two operators on $\Hil$ such that $\{c,c^\dagger\}=\1$, $c^2=0$, and $T>0$. Then, defining
\be a=T\,c\,T^{-1},\quad b=T\,c^\dagger\,T^{-1},\label{230}\en these operators satisfy (\ref{220}) as well as properties {\bf p1} and {\bf p2}.

Viceversa, given two operators $a$ and $b$ acting on $\Hil$, satisfying (\ref{220}) and properties {\bf p1} and {\bf p2},  it is possible to
define two operators, $c$ and $T$, such that $\{c,c^\dagger\}=\1$, $c^2=0$, $T=T^\dagger$ is strictly positive, and (\ref{230}) holds.

\end{thm}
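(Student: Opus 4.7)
The plan is to handle the two implications separately; both reduce to a similarity transformation argument, with the converse direction being the one that requires an actual construction of $T$.

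For the forward direction I would simply push the relations $\{c,c^\dagger\}=\1$ and $c^2=0$ through the conjugation $X\mapsto TXT^{-1}$. Similarity preserves products and sums, so $\{a,b\}=T\{c,c^\dagger\}T^{-1}=\1$, and $a^2=Tc^2T^{-1}=0$, and likewise $b^2=0$, giving (\ref{220}). For p1, let $e_0$ be the Fock vacuum of $c$ ($c\,e_0=0$) and set $\varphi_0:=T e_0$; then $a\varphi_0=T c\, e_0=0$, with $\varphi_0\neq 0$ by invertibility of $T$. For p2, using $T=T^\dagger$ one has $b^\dagger=(T c^\dagger T^{-1})^\dagger=T^{-1} c\, T$, so $\Psi_0:=T^{-1}e_0$ satisfies $b^\dagger\Psi_0=T^{-1}c\, e_0=0$.

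For the converse direction, the whole task is to exhibit the correct positive operator $T$ from the pseudo-fermionic data. Given $a,b$ satisfying (\ref{220}) together with p1, p2, I would first run the constructions of Subsection II.B to produce the biorthonormal pair $\{\varphi_n\},\{\Psi_n\}$ ($n=0,1$) with $\br\varphi_n,\Psi_m\kt=\delta_{n,m}$, together with the bounded, strictly positive, self-adjoint operators $S_\varphi,S_\Psi=S_\varphi^{-1}$ of (\ref{223}), which satisfy $S_\varphi\Psi_n=\varphi_n$. I would then define $T:=S_\varphi^{1/2}$ as the unique positive square root; it is automatically bounded, strictly positive, and self-adjoint. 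Setting $e_n:=T^{-1}\varphi_n$, the identity $T^2=S_\varphi$ and $S_\Psi=S_\varphi^{-1}$ give $e_n=T\Psi_n$, whence $\br e_n,e_m\kt=\br T^{-1}\varphi_n,T\Psi_m\kt=\br\varphi_n,\Psi_m\kt=\delta_{n,m}$, so $\{e_0,e_1\}$ is an orthonormal basis of $\Hil$. Finally I would put $c:=T^{-1}aT$, which immediately gives $c^2=T^{-1}a^2T=0$ and $a=TcT^{-1}$, the first half of (\ref{230}).

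The only non-trivial step is verifying that this single $T$ simultaneously implements the \emph{second} relation of (\ref{230}), namely $b=Tc^\dagger T^{-1}$; this is where the particular choice $T=S_\varphi^{1/2}$ is essential, since $c^\dagger=(T^{-1}aT)^\dagger=Ta^\dagger T^{-1}$, and what is needed is the equivalent identity $b=S_\varphi a^\dagger S_\varphi^{-1}=S_\varphi a^\dagger S_\Psi$. In the two-dimensional setting this is a direct computation on the basis $\{\varphi_0,\varphi_1\}$: on $\varphi_0$ one has $S_\varphi a^\dagger S_\Psi\varphi_0=S_\varphi a^\dagger\Psi_0=S_\varphi\Psi_1=\varphi_1=b\varphi_0$, while on $\varphi_1$ both sides vanish, since $(a^\dagger)^2=(a^2)^\dagger=0$ by (\ref{220}) and, analogously, $b\varphi_1=b^2\varphi_0=0$. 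Two operators agreeing on a basis of a two-dimensional space coincide, so $b=Tc^\dagger T^{-1}$ holds identically; the remaining CAR $\{c,c^\dagger\}=\1$ then follows at once from $\{c,c^\dagger\}=\{T^{-1}aT,\,T^{-1}bT\}=T^{-1}\{a,b\}T=\1$. The main conceptual obstacle here is not computational but the recognition that of the two natural candidates $S_\varphi^{1/2}$ and $S_\Psi^{1/2}$, only $T=S_\varphi^{1/2}$ yields a consistent single similarity carrying $(c,c^\dagger)$ to $(a,b)$.
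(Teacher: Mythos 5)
Your proof is correct and follows essentially the same route as the paper: both identify $T$ with $S_\varphi^{1/2}=S_\Psi^{-1/2}$, orthonormalize via $e_n=S_\Psi^{1/2}\varphi_n$, and verify (\ref{230}) by checking the action on the basis $\{\varphi_0,\varphi_1\}$. The only (immaterial) difference is that the paper defines $c$ abstractly as the standard lowering operator on the orthonormal basis and then checks both relations in (\ref{230}), whereas you define $c:=T^{-1}aT$ and check only the relation for $b$.
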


\begin{proof}

The proof of the first part of the theorem is trivial and will not be given here. As for the second part, we start with the following remark:
since $S_\Psi$ is positive and invertible, the operators $S_\Psi^{\pm1/2}$ are both well defined. Hence we can define, using $\F_\varphi$ and
$S_\Psi^{1/2}$, another family $\F_f=\{f_0,f_1\}$ of vectors $f_n:=S_\Psi^{1/2}\varphi_n$, $n=0,1$. $\F_f$ is an orthonormal basis for $\Hil$,
so that we can naturally define an operator $c$ via its action on $f_0$ and $f_1$: we put $c\,f_0=0$ and $c\,f_1=f_0$, so that $c^\dagger
f_0=f_1$ and $c^\dagger f_1=0$. It is easy to check that $\{c,c^\dagger\}=\1$ and that $c^2=0$. Moreover, since
$S_\Psi^{-1/2}c\,S_\Psi^{1/2}\varphi_0=0$ and $S_\Psi^{-1/2}c\,S_\Psi^{1/2}\varphi_1=\varphi_0$,  we deduce that
$a=S_\Psi^{-1/2}c\,S_\Psi^{1/2}$. Analogously we find that $b=S_\Psi^{-1/2}c^\dagger\,S_\Psi^{1/2}$, so that we can now identify $T$ in
(\ref{230}) with $S_\Psi^{-1/2}$, which is strictly positive.

\end{proof}

A first consequence of this theorem is that, since $\F_\varphi$ is the image of the orthonormal basis $\F_f$ via a bounded operator, with
bounded inverse, $S_\Psi^{-1/2}$, $\F_\varphi$ is a Riesz basis. A second consequence is that, introducing the self-adjoint number operator for
the fermionic operators, $N_0:=c^\dagger c$, this can be related to both $N$ and $\N$: \be N=S_\Psi^{-1/2}N_0\,S_\Psi^{1/2}, \quad
\N=S_\Psi^{1/2}N_0\,S_\Psi^{-1/2}, \label{231}\en which can be written as the following intertwining relations: $S_\Psi^{1/2}N_0=\N
S_\Psi^{1/2}$, $S_\Psi^{1/2}N=N_0 S_\Psi^{1/2}$. Putting together these equations we can also recover (\ref{229}).

\section{Examples}

In this section we will discuss some examples of our general framework, starting with the easiest one and discussing finally a rather general
application.

\subsection{A one parameter extension of CAR}

This first example is motivated by what we have done in \cite{bagpb1} for PB, where ordinary bosonic operators $A$ and $A^\dagger$,
$[A,A^\dagger]=\1$, were used to construct two linear combinations, $X=\alpha_x A+\beta_x A^\dagger$ and $Y=\alpha_y A+\beta_y A^\dagger$ with
$X\neq Y^\dagger$, satisfying the commutation rule $[X,Y]=\1$. We repeat here a similar construction, and we show that, in the present
situation, we cannot go too far.

Let $c$ and $c^\dagger$ be two fermionic operators: $\{c,c^\dagger\}=\1$, $c^2=0$, and let us introduce two new operators $a=\alpha_a c+\beta_a
c^\dagger$ and $b=\alpha_b c+\beta_b c^\dagger$, with coefficients chosen real and such that $a^\dagger\neq b$. The pseudo-fermionic rules
$\{a,b\}=\1$, $a^2=0$ and $b^2=0$ impose the following conditions: $\alpha_a\beta_b+\beta_a\alpha_b=1$, $\alpha_a\beta_a=0$ and
$\alpha_b\beta_b=0$. These two last equalities, which didn't appear when dealing with PB, makes this example almost trivial. Indeed, a possible
choice for the operators $a$ and $b$ is the following: $a=\beta_a c^\dagger$, $b=\frac{1}{\beta_a}\,c$, which implies that $N=\1-N_0$. Also,
calling as usual $f_0$ and $f_1$ the eigenstates of $N_0=c^\dagger c$ with eigenvalues 0 and 1, we find $\varphi_0=k\,f_1$,
$\varphi_1=b\varphi_0=\frac{k}{\beta_a}\,f_0$, and $\Psi_0=\frac{1}{\overline{k}}\,f_1$,
$\Psi_1=a^\dagger\Psi_0=\frac{\beta_a}{\overline{k}}\,f_0$ for all possible choice of complex $k\neq0$. We see that, as expected, $\F_\varphi$
and $\F_\Psi$ are biorthonormal bases of $\Hil$ (but they are not very different from $\F_f$). Needless to say, another possible choices of
$\alpha$'s and $\beta$'s is possible, producing $a=\alpha_a c$, $b=\frac{1}{\alpha_a}\,c^\dagger$.

\subsection{An example from the theorem}

Let $k, \alpha$ be two real numbers, with $k>0$ and $\alpha\in]-1,1[$. Let us introduce the following matrices
$$
a=\frac{1}{1-\alpha^2}\left(
    \begin{array}{cc}
      -\alpha & 1 \\
      -\alpha^2 & \alpha \\
    \end{array}
  \right),\qquad
  b=\frac{1}{1-\alpha^2}\left(
    \begin{array}{cc}
      \alpha & -\alpha^2 \\
      1 & -\alpha \\
    \end{array}
  \right),
$$
and the following vectors
$$
\varphi_0=k\left(
             \begin{array}{c}
               1 \\
               \alpha \\
             \end{array}
           \right),\quad
           \varphi_1=k\left(
             \begin{array}{c}
               \alpha \\
               1 \\
             \end{array}
           \right),\quad
\Psi_0=\frac{1}{k(1-\alpha^2)}\left(
             \begin{array}{c}
               1 \\
               -\alpha \\
             \end{array}
           \right),\quad
\Psi_1=\frac{1}{k(1-\alpha^2)}\left(
             \begin{array}{c}
               -\alpha \\
               1 \\
             \end{array}
           \right).
$$
It is easy to see that $\F_\varphi$ and $\F_\Psi$ are biorthonormal bases of $\Hil$, and that $a$ and $b$ are pseudo-fermionic operators in the
sense that $\{a,b\}=\1$, $a^2=0$ and $b^2=0$. Moreover $a\varphi_0=b^\dagger\Psi_0=0$,  $b\varphi_0=\varphi_1$ and $a^\dagger\Psi_0=\Psi_1$.
The operators $S_\varphi$ and $S_\Psi$ can now be easily deduced by computing, for example, $S_\varphi
f=\left<\varphi_0,f\right>\varphi_0+\left<\varphi_1,f\right>\varphi_1$, for a generic vector $f\in\Hil$:
$$
S_\varphi=k^2\left(
               \begin{array}{cc}
                 1+\alpha^2 & 2\alpha \\
                 2\alpha & 1+\alpha^2 \\
               \end{array}
             \right), \quad
             S_\Psi=\frac{1}{k^2(1-\alpha^2)^2}\left(
               \begin{array}{cc}
                 1+\alpha^2 & -2\alpha \\
                 -2\alpha & 1+\alpha^2 \\
               \end{array}
             \right).
$$
These are clearly self-adjoint matrices and one is the inverse of the other: $S_\varphi^{-1}=S_\Psi$. The square root of $S_\varphi$ turn out
to be the following matrix:
$$
S_\varphi^{1/2}=k\left(
                   \begin{array}{cc}
                     1 & \alpha \\
                     \alpha & 1 \\
                   \end{array}
                 \right), \quad\Rightarrow \quad S_\varphi^{-1/2}= S_\Psi^{1/2}=\frac{1}{k(1-\alpha^2)}\left(
                   \begin{array}{cc}
                     1 & -\alpha \\
                     -\alpha & 1 \\
                   \end{array}
                 \right).
$$
Following the proof of the theorem above, and formula (\ref{230}) in particular with $T$ identified with $S_\varphi^{1/2}$, we deduce that
$$
S_\varphi^{-1/2}a\,S_\varphi^{1/2}=\left(
                                     \begin{array}{cc}
                                       0 & 1 \\
                                       0 & 0 \\
                                     \end{array}
                                   \right)=:c,\quad
                                   S_\varphi^{-1/2}b\,S_\varphi^{1/2}=\left(
                                     \begin{array}{cc}
                                       0 & 0 \\
                                       1 & 0 \\
                                     \end{array}
                                   \right)=:c^\dagger.
$$
Hence we recover the ordinary fermionic operators, as claimed by the theorem. Moreover $S_\varphi^{-1/2}\varphi_n=f_n$, $n=0,1$, where
$$f_0=\left(
        \begin{array}{c}
          1 \\
          0 \\
        \end{array}
      \right),\quad
f_1=\left(
        \begin{array}{c}
          0 \\
          1 \\
        \end{array}
      \right).
$$
Therefore, as expected, PF turn out to be similar to ordinary fermions.

\subsection{An example from the literature}

In 2007, in \cite{tripf}, an effective non self-adjoint hamiltonian describing a two level atom interacting with an electromagnetic field was
analyzed in connection with pseudo-hermitian systems. This example is meant to show that the cited model can be very naturally rewritten in
terms of pseudo-fermionic operators, and that the structure previously described naturally arises. The starting point is the Schr\"odinger
equation \be i\dot\Phi(t)=H_{eff}\Phi(t), \qquad H_{eff}=\frac{1}{2}\left(
                                                       \begin{array}{cc}
                                                         -i\delta & \overline{\omega} \\
                                                         \omega & i\delta \\
                                                       \end{array}
                                                     \right).
\label{triex}\en Here $\delta$ is a real quantity, related to the decay rates for the two levels, while the complex parameter $\omega$
characterizes the radiation-atom interaction. We refer to \cite{tripf} for further details. It is clear that $H_{eff}\neq H_{eff}^\dagger$. It
is convenient to write $\omega=|\omega|e^{i\theta}$. Then, we introduce the  operators
$$
a=\frac{1}{2\Omega}\left(
                     \begin{array}{cc}
                       -|\omega| & -e^{-i\theta}(\Omega+i\delta) \\
                       e^{i\theta}(\Omega-i\delta) & |\omega| \\
                     \end{array}
                   \right), \quad
b=\frac{1}{2\Omega}\left(
                     \begin{array}{cc}
                       -|\omega| & e^{-i\theta}(\Omega-i\delta) \\
                       -e^{i\theta}(\Omega+i\delta) & |\omega| \\
                     \end{array}
                   \right).
$$
Here $\Omega=\sqrt{|\omega|^2-\delta^2}$, which we will assume here to be real and strictly positive. A direct computation shows that
$\{a,b\}=\1$, $a^2=b^2=0$. Hence $a$ and $b$ are pseudo-fermionic operators. Moreover, $H_{eff}$ can be written in terms of these operators as
$H_{eff}=\Omega\left(ba-\frac{1}{2}\1\right)$.

To recover the pseudo-fermionic structure we first need to check whether a non-zero vector $\varphi_0$ annihilated by $a$ does exist. It is
easy to find such a vector, as well as a second vector $\Psi_0$ annihilated by $b^\dagger$. These two vectors are
$$
\varphi_0=k\left(
             \begin{array}{c}
               1 \\
               -\,\frac{e^{i\theta}(\Omega-i\delta)}{|\omega|} \\
             \end{array}
           \right),\qquad
\Psi_0=k'\left(
             \begin{array}{c}
               1 \\
               -\,\frac{e^{i\theta}(\Omega+i\delta)}{|\omega|} \\
             \end{array}
           \right),
$$
where $k$ and $k'$ are normalization constants, partially fixed by the requirement that
$\left<\varphi_0,\Psi_0\right>=\overline{k}\,k'\left(1+\frac{1}{|\omega|^2}(\Omega+i\delta)^2\right)=1$. Following what we have done in Section
II we also introduce the following vectors
$$
\varphi_1=b\varphi_0=k\left(
             \begin{array}{c}
               \frac{i\delta-\Omega}{|\omega|} \\
               -e^{i\theta} \\
             \end{array}
           \right),\qquad
\Psi_1=a^\dagger\Psi_0=k'\left(
             \begin{array}{c}
               \frac{-i\delta-\Omega}{|\omega|} \\
               -e^{i\theta} \\
             \end{array}
           \right).
$$
It is now easy to check that $\F_\varphi$ and $\F_\Psi$ are biorthonormal bases of $\Hil$, and we can also check that
$$
H_{eff}\varphi_0=-\,\frac{\Omega}{2}\,\varphi_0,\quad H_{eff}\varphi_1=\frac{\Omega}{2}\,\varphi_1, \quad
 H_{eff}^\dagger\Psi_0=-\,\frac{\Omega}{2}\,\Psi_0,\quad H_{eff}^\dagger\Psi_1=\frac{\Omega}{2}\,\Psi_1.
$$
Therefore $H_{eff}$ and $H_{eff}^\dagger$ are isospectrals, as expected. To carry on our analysis we now compute $S_\varphi$ and $S_\Psi$,
which are found to be
$$
S_\varphi=2|k|^2\left(
                  \begin{array}{cc}
                    1 & \frac{-i\delta}{|\omega|}\,e^{-i\theta} \\
                    \frac{i\delta}{|\omega|}\,e^{i\theta} & 1 \\
                  \end{array}
                \right),\quad
S_\Psi=\frac{|\omega|^2}{2|k|^2\Omega^2}\left(
                  \begin{array}{cc}
                    1 & \frac{i\delta}{|\omega|}\,e^{-i\theta} \\
                    \frac{-i\delta}{|\omega|}\,e^{i\theta} & 1 \\
                  \end{array}
                \right),
$$
and turn out to be one the inverse of the other. They are also, under our assumption $\Omega>0$, positive definite matrices, as they should.
Using now the theorem proved in Section II, we can use $S_\varphi^{\pm1/2}$ to define two {\em standard} fermion operators  $c$ and
$c^\dagger$, and their related number operator $N_0=c^\dagger c$, out of $a$ and $b$. Hence we easily find that
$$H_{eff}=S_\varphi^{1/2}\,h\,S_\varphi^{-1/2},$$
where $h=\Omega\left(c^\dagger c-\frac{1}{2}\1\right)$ is a self adjoint operator. This shows that the effective hamiltonian $H_{eff}$ is
similar to a self-adjoint operator, suggesting that, at least for this model, the appearance of a non self-adjoint hamiltonian is related to
the choice of a {\em natural but wrong} scalar product in the Hilbert space $\Hil$. In fact, replacing the standard scalar product
$\left<f,g\right>$ with a new one, $\left<f,g\right>_S=\left<S_\varphi^{-1/2}f,S_\varphi^{-1/2}g\right>$, would make $H_{eff}$ self adjoint:
for all $f, g\in \Hil$ indeed we find
$$
\left<H_{eff}\,f,g\right>_S=\left<S_\varphi^{-1/2}\left(S_\varphi^{1/2}h\,S_\varphi^{-1/2}\right)f,S_\varphi^{-1/2}g\right>=
\left<S_\varphi^{-1/2}f,h\,S_\varphi^{-1/2}g\right>=$$
$$=\left<S_\varphi^{-1/2}f,S_\varphi^{-1/2}\left(S_\varphi^{1/2}h\,S_\varphi^{-1/2}\right)g\right>=\left<f,H_{eff}\,g\right>_S.
$$

\subsection{A general construction}

The starting point for this example is a pair of biorthonormal bases, $\F_\varphi=\{\varphi_0,\varphi_1\}$ and $\F_\Psi=\{\Psi_0,\Psi_1\}$,
$\left<\varphi_j,\Psi_k\right>=\delta_{j,k}$, which we use to construct two non self-adjoint operators
$$
H=\epsilon_0|\varphi_0\left>\right<\Psi_0|+\epsilon_1|\varphi_1\left>\right<\Psi_1|, \quad
H^\dag=\epsilon_0|\Psi_0\left>\right<\varphi_0|+\epsilon_1|\Psi_1\left>\right<\varphi_1|,
$$
where $\left(|f\left>\right<g|\right)h=\left<g,h\right>f$, for all $f,g,h\in\Hil$. Here we take $\epsilon_0$ and $\epsilon_1$ real. It is clear
that $H\varphi_n=\epsilon_n\varphi_n$ and $H^\dagger\Psi_n=\epsilon_n\Psi_n$, for $n=0,1$. It is also clear that, for instance
$|\varphi_0\left>\right<\Psi_0|+|\varphi_1\left>\right<\Psi_1|=\1$, while $a=|\varphi_0\left>\right<\Psi_1|$ and
$b=|\varphi_1\left>\right<\Psi_0|$. Hence we can write
$$
H=\left(\epsilon_1-\epsilon_0\right)b\,a+\epsilon_0\1,
$$
which is interesting for us since it implies that $H-\epsilon_0\1$ can be factorized. Now, for concreteness' sake, let us consider the
following expressions for $\varphi_k$ and $\Psi_k$:
$$
\varphi_0=\left(
            \begin{array}{c}
              \cosh(\theta) \\
              \sinh(\theta)e^{-i\varphi} \\
            \end{array}
          \right), \qquad
\varphi_1=\left(
            \begin{array}{c}
              \sinh(\theta)e^{i\varphi} \\
              \cosh(\theta) \\
            \end{array}
          \right),
$$
$$
\Psi_0=\left(
            \begin{array}{c}
              \cosh(\theta) \\
              -\sinh(\theta)e^{-i\varphi} \\
            \end{array}
          \right), \qquad
\Psi_1=\left(
            \begin{array}{c}
              -\sinh(\theta)e^{i\varphi} \\
              \cosh(\theta) \\
            \end{array}
          \right).
$$
With this choice, the operator $H$ looks like
$$
H=\left(
    \begin{array}{cc}
      \epsilon_0\cosh^2(\theta)-\epsilon_1\sinh^2(\theta) & (\epsilon_1-\epsilon_0)\sinh(\theta)\cosh(\theta)e^{i\varphi} \\
      -(\epsilon_1-\epsilon_0)\sinh(\theta)\cosh(\theta)e^{-i\varphi} & -\epsilon_0\sinh^2(\theta)+\epsilon_1\cosh^2(\theta) \\
    \end{array}
  \right),
$$
which is manifestly non self-adjoint. The matrix expressions for $a$ and $b$ turn out to be
$$
a=\left(
    \begin{array}{cc}
      -\sinh(\theta)\cosh(\theta)e^{-i\varphi} & \cosh^2(\theta) \\
      -\sinh^2(\theta)e^{-2i\varphi} & \sinh(\theta)\cosh(\theta)e^{-i\varphi} \\
    \end{array}
  \right),
$$
$$
b=\left(
    \begin{array}{cc}
      \sinh(\theta)\cosh(\theta)e^{i\varphi} & -\sinh^2(\theta)e^{2i\varphi} \\
       \cosh^2(\theta) & -\sinh(\theta)\cosh(\theta)e^{i\varphi} \\
    \end{array}
  \right),
$$
which explicitly satisfy $\{a,b\}=\1$, $a^2=b^2=0$. Moreover we get
$$
S_\Psi=\left(
    \begin{array}{cc}
      \cosh(2\theta) & -\sinh(2\theta)e^{i\varphi} \\
       -\sinh(2\theta)e^{-i\varphi} & \cosh(2\theta)\\
    \end{array}
  \right),
$$
which satisfies the intertwining relation $S_\Psi H=H^\dagger S_\Psi$. The self-adjoint counterpart of $H$ looks now very simple:
$$
h=S_\Psi^{1/2}H\,S_\Psi^{-1/2}=\left(
                                 \begin{array}{cc}
                                   \epsilon_0 & 0 \\
                                   0 & \epsilon_1 \\
                                 \end{array}
                               \right),
$$
which again suggests that $H$, non self-adjoint, is nothing but the operator $h$ in a Hilbert space with a different scalar product.

\section{Bi-coherent states}

The existence of bi-coherent states for pseudo-fermions has been considered in \cite{tripf}, but from a different perspective with respect to
the one we are interested here. We will consider this problem now putting in evidence the pseudo-fermionic structure.

Coherent states (CS) for fermion or for boson operators can be introduced in several inequivalent ways, depending on which aspect of CS we are
interested in. In particular, we will adopt the probably most common definition of what a CS is, i.e. a state $\Phi_\xi$ which is eigenvector
of the annihilation operator $c$, with $\{c,c^\dagger\}=\1$ and $c^2=0$, with eigenvalue $\xi$: $c\Phi_\xi=\xi\Phi_\xi$. Is is well known that
$\xi$ must be a grassmann number satisfying
$$
\{\xi^\sharp,\xi^\sharp\}=\{\xi^\sharp,c^\sharp\}=0,
$$
where $\xi^\sharp$ and $c^\sharp$ stand respectively for $\xi$ or $\overline{\xi}$ and for $c$ or $c^\dagger$. Calling, as usual, $f_0$ and
$f_1$ the eigenstates of $N_0=c^\dagger c$ with eigenvalues 0 and 1, equivalent forms of $\Phi_\xi$ are the following:
$$
\Phi_\xi=e^{c^\dagger \xi-\overline{\xi}c}f_0=e^{-\frac{\overline{\xi}\xi}{2}}\left(f_0+f_1\xi\right).
$$
These states {\em solve the unity} in the following sense: using the grasmmmann integration rules $\int d\xi=\int d{\overline{\xi}}=0$ and
$\int \xi d\xi=\int \overline{\xi} d{\overline{\xi}}=1$, we have $$\int |\Phi_\xi\left>\right<\Phi_\xi|\,d\xi
d{\overline{\xi}}=|f_0\left>\right<f_0|+|f_1\left>\right<f_1|=\1.$$

Let us now define two new vectors: $\varphi_\xi=S_\Psi^{-1/2}\Phi_\xi$ and $\Psi_\xi=S_\Psi^{1/2}\Phi_\xi$. It is easy to check that
$$
a\varphi_\xi=\xi\varphi_\xi,\quad b^\dagger\Psi_\xi=\xi\Psi_\xi,
$$
so that they are eigenstates of the two lowering pseudo-fermionic operators. These vectors can be written in different ways. In particular they
can be written in terms of $\F_\varphi$ and $\F_\Psi$ as follows:
$$
\varphi_\xi=S_\Psi^{-1/2}\Phi_\xi=S_\Psi^{-1/2}e^{c^\dagger \xi-\overline{\xi}c}f_0=\left(1-\frac{1}{2}\overline{\xi}\xi\right)
\varphi_0-\xi\varphi_1,
$$
and
$$
\Psi_\xi=S_\Psi^{1/2}\Phi_\xi=S_\Psi^{1/2}e^{c^\dagger \xi-\overline{\xi}c}f_0=\left(1-\frac{1}{2}\overline{\xi}\xi\right)
\Psi_0-\xi\Psi_1.
$$
Using these simple expressions for $\varphi_\xi$ and $\Psi_\xi$ it is now an easy exercise to check explicitly that
$a\varphi_\xi=\xi\varphi_\xi$ and $b^\dagger\Psi_\xi=\xi\Psi_\xi$. Moreover we find, using the above rules for the grassmann integration,
$$
\int |\varphi_\xi\left>\right<\Psi_\xi|\,d\xi d{\overline{\xi}}=|\varphi_0\left>\right<\Psi_0|+|\varphi_1\left>\right<\Psi_1|=\1,
$$
so that, used together, $\varphi_\xi$ and $\Psi_\xi$ produce a resolution of the identity. This is the reason why these states are called {\em
bi-coherent}.

\section{Conclusions}

We have shown how the CAR can be modified to get two families of biorthonormal vectors spanning all of $\Hil$ and producing some interesting
intertwining relations. The results of our construction are similar to those obtained for PB, even if the assumptions needed can be
significantly relaxed here. We have considered the relations between this structure and some non self-adjoint hamiltonians, showing that these
became self-adjoint with a proper definition of the scalar product which involves the operators $S_\varphi$ and $S_\Psi$. Finally, we have also
discussed how bi-coherent states can be introduced.

\section*{Acknowledgements}

I would like to thank very, very much Prof. Francesco Oliveri, which is always ready to support me with any problem in numerical computations
(and related stuff). I also like to thank Prof. Kibler for giving me the idea during a conference in Prague, maybe just for joke, to extend my
PB to PF. Dear Maurice, this paper is exactly this extension!  The author also acknowledges financial support by the MIUR.

\vspace{8mm}

 \appendix

\renewcommand{\theequation}{\Alph{section}.\arabic{equation}}

 \section{\hspace{-.7cm}ppendix:  When does (\ref{229}) hold true?}

In this appendix we consider the following problem: given the following non self-adjoint {\em hamiltonian}
 $$
H=\left(
    \begin{array}{cc}
      a & b \\
      c & -a \\
    \end{array}
  \right),
 $$
with $a, b, c\in \Bbb{C}$, when does a positive-defined, self-adjoint matrix $S$  exist such that $S\,H=H^\dagger\,S$?

We first observe that $H$ is traceless. This is not a big constraint, since with a simple (non-unitary) transformation we can get such an
hamiltonian starting with a non-traceless one. Secondly, because of our requirements on $S$, this must be of the following form:
$$
S=\left(
    \begin{array}{cc}
      \sigma_{11} & s \\
      \overline{s} & \sigma_{22} \\
    \end{array}
  \right),
$$
where $s\in\Bbb{C}$, while $\sigma_{11}, \sigma_{22}\in\Bbb{R}$. Moreover, since $S$ has to be positive, the following must be satisfied:
$\sigma_{11}>0$ and $\sigma_{11}\sigma_{22}-|s|^2>0$, which automatically imply that also $\sigma_{22}$ must be positive.

Equation $S\,H=H^\dagger\,S$ can be rewritten as the following matrix equation
$$
X\Phi=0,\,\mbox{ where }\, X=\left(
                  \begin{array}{cccc}
                    \Im(a) & 0 &  \Im(c) & \Re(c) \\
                    0 & \Im(a) & -\Im(b) & \Re(b) \\
                    \Re(b) & -\Re(c) & -2\Re(a) & 0 \\
                    \Im(b) & \Im(c) & 0 & -2\Re(a) \\
                  \end{array}
                \right), \,\mbox{ and }\, \Phi=\left(
                                                 \begin{array}{c}
                                                    \sigma_{11} \\
                                                    \sigma_{22} \\
                                                   \Re(s) \\
                                                   \Im(s) \\
                                                 \end{array}
                                               \right).
$$
It is now clear that a non trivial solution can exist only if $\det(X)=0$. This produces the following condition on the parameters defining
$H$: a necessary condition for $S$ to exist is that $2\Re(a)\Im(a)+\Im(bc)=0$. When this condition is satisfied, the explicit expression of $S$
is fixed solving the above linear equation for $\Phi$.

\vspace{2mm}

It is interesting to observe that, not surprisingly, $H_{eff}$ in (\ref{triex}) satisfies condition $\det(X)=0$, and the matrix $S$ turns out
to be $\left(
                                                                                                                                           \begin{array}{cc}
                                                                                                                                             \sigma & s_r+is_i \\
                                                                                                                                              s_r-is_i & \sigma \\
                                                                                                                                           \end{array}
                                                                                                                                         \right),
$ where $\sigma$ must be chosen positive and such that $\sigma=\frac{1}{\delta}(\Im(\omega)s_r+\Re(\omega)s_i)$. Moreover,
$\sigma^2-(s_r^2+s_i^2)>0$ must also be satisfied.

Also, in \cite{jon}, the following hamiltonian is considered:
$$
H=\left(
    \begin{array}{cc}
      a & i\,b \\
      i\,b & -a \\
    \end{array}
  \right),
$$
with $a, b\in\Bbb{R}$. Again, it is easy to check that $\det(X)=0$ and that $S$ must have the following form $$S=\left(
    \begin{array}{cc}
      \sigma_{11} & \frac{b}{2a}(\sigma_{11}+\sigma_{22}) \\
      -\frac{b}{2a}(\sigma_{11}+\sigma_{22}) & \sigma_{22} \\
    \end{array}
  \right),$$
with $\sigma_{11}$ and $\sigma_{22}$ both positive, and satisfying the inequality
$\frac{\sigma_{11}}{\sigma_{22}}+\frac{\sigma_{22}}{\sigma_{11}}>\frac{2}{b^2}(2a^2-b^2)$.

\end{document}